\newtheorem{theorem}{Theorem}
\begin{document}

\title{	Reconfigurable Intelligent Surface-Empowered Self-Interference Cancellation for 6G Full-Duplex MIMO Communication Systems }

\author{\IEEEauthorblockN{Chia-Jou Ku, Li-Hsiang Shen and Kai-Ten Feng}
\IEEEauthorblockA{
Department of Electronics and  Electrical Engineering\\ 
National Yang Ming Chiao Tung University, Hsinchu, Taiwan\\ 
chiajouku.ee09@nycu.edu.tw, 
gp3xu4vu6.cm04g@nctu.edu.tw
and ktfeng@nycu.edu.tw}}
\maketitle

\begin{abstract}
	Substantially increasing wireless traffic and extending serving coverage is required with the advent of sixth-generation (6G) wireless communication networks. Reconfigurable intelligent surface (RIS) is widely considered as a promising technique which is capable of improving the system sum rate and energy efficiency. Moreover, full-duplex (FD) multi-input-multi-output (MIMO) transmission provides simultaneous transmit and received signals, which theoretically provides twice of spectrum efficiency. However, the self-interference (SI) in FD system is a challenging task requiring high-overhead cancellation, which can be resolved by configuring appropriate phase shifts of RIS. This paper has proposed an RIS-empowered full-duplex interference cancellation (RFIC) scheme in order to alleviate the severe interference in an RIS-FD system. We consider the interference minimization of RIS-FD MIMO while guaranteeing quality-of-service (QoS) of whole system. The closed-form solution of RIS phase shifts is theoretically derived with the discussion of different numbers of RIS elements and receiving antennas. Simulation results reveal that the proposed RFIC scheme outperforms existing benchmarks with more than 50$\%$ of performance gain of sum rate.

\end{abstract}

\begin{IEEEkeywords}
6G, reconfigurable intelligent surface, full-duplex, multi-input-multi-output, quality-of-service, interference mitigation.
\end{IEEEkeywords}
\section{Introduction}

	 The requirement of wireless data demands is increasingly high as the sixth-generation (6G) technology evolves. To meet the compellingly high traffic demands, there are tremendous emerging techniques for 6G wireless communications. Reconfigurable intelligent surface (RIS) is widely considered as one of the promising 6G techniques for extending coverage area, reducing power consumption, and enhancing system performance. The RIS is composed of numerous ultra-thin metamaterial-based elements, which can reflect the received signals on the surface without additional signal processing. Moreover, as a benefit of cost-effective RIS, it can be readily employed by reconfiguring its phase shifts to alternate the original channel between transmitters and receivers in order to increase the service coverage and spectrum-energy efficiency \cite{2,3,6,14}. 
	 
	 RIS-assisted 6G wireless communications have been widely discussed due to its cost-effective and unsophisticated deployment which can be utilized in various systems, e.g., signal quality is improved by RIS under non-orthogonal multiple access system \cite{10} and multi-input-multi-output (MIMO) networks \cite{11}. Another arising critical problem of RIS is the joint design of precoding at transmitters and phase shifts of RIS elements \cite{7,8,15}. In \cite{7}, the authors aim for minimizing the symbol error rate by deploying RIS, whilst authors in \cite{8} maximize the energy efficiency of the reflected signals. By adjusting RIS configuration, the authors in \cite{15} tend to minimize total base station (BS) power consumption. In addition, enhanced security can be reached by converting the original intruded wireless paths with RIS deployment \cite{9,12}. In \cite{100,101}, theoretical analysis of service coverage probability have been conducted with RIS deployment. Moreover, RIS can be utilized for interference management, which is able to suppress the strong interference and to enhance the desired signal strength by adjusting RIS phase shifts in different transmission directions \cite{16}.
	 
	 In order to further increase the spectrum efficiency, full-duplex (FD) communication can simultaneously transmit and receive signals at a single operating frequency, which theoretically provides twice of spectrum efficiency compared to the half-duplex technique. However, the main challenging issue is that FD possesses strong self-interference (SI) degrading the signal quality, which results in lowered sum rate. Therefore, existing literatures have researched on SI mitigation via signal processing methods \cite{26,4,1,5,13}. In \cite{4}, they minimize severe SI power by deriving the optimal eigen-based beamforming through spatial-domain elimination. SI mitigation with oblique projection is applied in an FD system \cite{1}. To maximize the vehicular-to-infrastructure networks under the FD system, a resource allocation method is proposed in \cite{5}. Moreover, in \cite{13}, a resource management problem based FD networks with consideration of user equipment's (UE's) queue backlog and time-varying channels is discussed. However, the above-mentioned works require additional overhead of channel estimation and signal processing for FD SI cancellation. Therefore, it becomes compellingly imperative to design efficient RIS to alternate wireless channels in order to alleviate interference in an RIS-FD system. 
	 
	 In this paper, we conceive an FD MIMO transmission which is empowered by RIS and aim for maximizing total system sum rate in an RIS-FD network while guaranteeing both downlink/uplink (DL/UL) quality-of-service (QoS) and RIS phase constraint.
	 The contributions of this work are summarized as follows.
\begin{itemize}
	
	\item We have proposed to formulate and derive the solution of RIS phase shifts for maximizing system sum rate with the guarantee of DL/UL signal quality in the RIS-FD MIMO systems, considering both DL SI and UE's UL co-channel interference.
	\item The proposed RIS-empowered full-duplex interference cancellation (RFIC) scheme can entirely remove the DL SI and UL UE's co-channel interference when the number of RIS elements is equal to that of BS transmit antennas and DL UEs. The corresponding optimal phase shift of RIS is theoretically proved in a closed-form manner.
	
	\item Simulations have revealed that the proposed RFIC scheme can largely enhance the sum rate performance by cancelling SI and co-channel interference under proposed RIS-FD system in terms of different configurations. The comparison also demonstrates that our RFIC scheme outperforms the other existing interference cancellation methods in open literatures with more than 50$\%$ of performance gain of sum rate.
\end{itemize} 
	
\addtolength{\topmargin}{0.055 in}
\section{System Model And Problem Formulation} \label{SM}

	We consider an RIS-assisted FD MIMO system as illustrated in Fig. \ref{Fig.7}. An RIS-FD MIMO BS consists of $N_t$ transmit antennas and $N_r$ receiving antennas. Note that we deploy a single RIS to empower the data transmission which is equipped with $K$ reflecting elements. There are $N$ UEs intending UL transmission and $M$ UEs desired for DL reception operating at the identical frequency bands, where each UE is assumed to be equipped with a 	single antenna. DL and UL transmission paths are categorized by direct path and reflected paths. For UL, ${\rm\mathbf{U} \in \mathbb{C}}^{N_r\times{N}}$ is the direct path from UL UEs to BS receiving antennas, while the reflected paths from UL UEs to RIS and from RIS to BS are respectively defined as ${\rm\mathbf{U_1} \in \mathbb{C}}^{K\times{N}}$ and ${\rm\mathbf{U_2} \in \mathbb{C}}^{N_r\times{K}}$. For DL, ${\rm\mathbf{D} \in \mathbb{C}}^{M\times{N_t}}$ is the direct path from BS transmit antennas to DL UEs, and the reflected paths from BS transmit antennas to RIS and from RIS to DL UEs are respectively defined as ${\rm\mathbf{D_1} \in \mathbb{C}}^{K\times{N_t}}$ and ${\rm\mathbf{D_2} \in \mathbb{C}}^{M\times{K}}$. For this system, the self-interference channel induced by FD MIMO from BS transmit to received antennas is denoted as ${\rm\mathbf{S} \in \mathbb{C}}^{N_r\times{N_t}}$ and the co-channel interference from UL UEs to DL UEs is denoted as ${\rm\mathbf{V} \in \mathbb{C}}^{M\times{N}}$.
\addtolength{\topmargin}{0.105 in}
\begin{figure}
\centering
\includegraphics[width=3.3in]{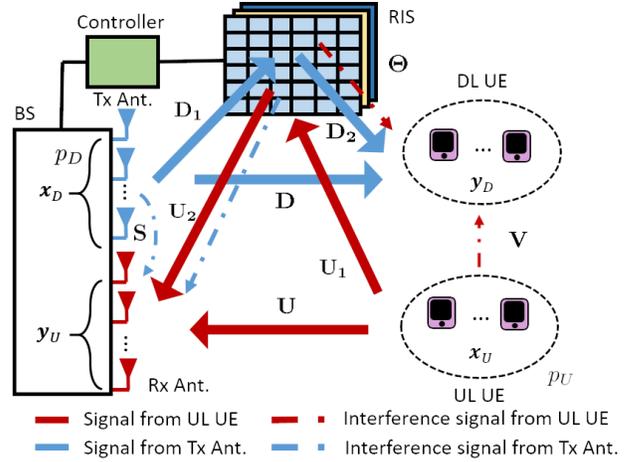}
\caption{RIS-assisted FD MIMO system for 6G wireless communications.} \label{Fig.7}
\end{figure}
As for RIS, it configures the phase shifters for converting transmission directions, which requires zero power without circuit power amplifier. We define 
\begin{equation} \label{RIS}
	\mathbf{\Theta} = \operatorname{diag}(\alpha {e^{j\theta_{1}}},...,\alpha e^{j\theta_{i}},...,\alpha {e^{j\theta_{K}}})
\end{equation}
as the phase shift diagonal matrix of RIS, where $\operatorname{diag}(\cdot)$ is the diagonal operation of a matrix, $\alpha$ is a constant related to reflection decaying efficiency, and $\theta_{i} \in [0,2\pi]$ is the phase of the $i$-th element of RIS. The received DL signal at DL UE and received UL signal at BS receiving antennas can be expressed
\begin{equation} \label{received DL}
	{\rm\mathbf{y}}_D \!=\! \sqrt{p_D}{\rm\mathbf{(D \!+\! D_2 \Theta D_1) x}}_D +\sqrt{p_U}{\rm\mathbf{(V +D_2 \Theta U_1)x}}_U+{\rm\mathbf{n}}_1,
\end{equation}
\begin{equation} \label{received UL}
	{\rm\mathbf{y}}_U \!=\! \sqrt{p_U}{\rm\mathbf{(U \!+\! U_2 \Theta U_1) x}}_U +\sqrt{p_D}{\rm\mathbf{(S +U_2 \Theta D_1)x}}_D+{\rm\mathbf{n}}_2,
\end{equation}
where $p_U$ and $p_D$ are respectively the UL and DL transmit power of UEs and BS. ${\rm\mathbf{x}}_U=[x_{U,1}, x_{U,2},\cdots, x_{U,N}]^T\in \mathbb{C}^{N}$ and ${\rm\mathbf{x}}_D=[x_{D,1}, x_{D,2},\cdots, x_{D,N_t}]^T\in \mathbb{C}^{N_t}$ denote the signals from UL UEs and from DL transmit antennas at BS, respectively, and ${[\cdot]}^{T}$ is the transpose operation. The notations of ${\rm\mathbf{n}_1,\rm\mathbf{n}_2\sim\mathcal{C}\mathcal{N}}({\rm{\mathbf{0}}},N_0 B{\rm{\mathbf{e}}}_{N_r})$ are defined as the complex white Gaussian noises, where $\mathbf{0}$ is the zero-mean vector, ${\rm{\mathbf{e}}}_{N_r}$ denotes the unit vector, $N_0$ is noise power spectral density and $B$ is the system bandwidth. Note that the signal channel includes both the received signal from direct link and reflected signals through the RIS, so does the interference channel.

	By deploying the RIS, we can potentially enhance the quality of signal by dynamically adjusting the RIS phase shifter $\rm\mathbf{\Theta}$ while guaranteeing the performance of DL/UL system and phase constraint of $[0,2\pi ]$. The considered problem for sum rate maximization on both UL and DL can be formulated as
\begin{subequations}\label{obj}
  \begin{align} 
   \quad &\underset{\rm\mathbf{\Theta}}{\text{max}}\quad 
R_U+R_D\\
    &\text{s.t.} \quad \theta_{i} \in [0,2\pi],  \quad \forall \, 1\leq i \leq K, \label{pp1}\\
    &\quad\quad R_U  \ge \gamma_{thr,U},\label{D2}\\
    &\quad\quad R_D \ge \gamma_{thr,D},\label{D3}   
  \end{align}
\end{subequations}
where 
 \begin{equation}
 \begin{small}
  \begin{aligned} 
   & R_U=
\log_2\left(1+\frac{\parallel \sqrt{p_U}{\rm\mathbf{\left(U + U_2 \Theta U_1\right)}} {\rm\mathbf{x}}_U \parallel^2}{\parallel \sqrt{p_D}{\rm\mathbf{\left(S +U_2 \Theta D_1\right)}}{\rm\mathbf{x}}_D \parallel^2 +\parallel{\rm\mathbf{n}}_1\parallel^2}\right),   
  \end{aligned}
  \end{small}
\end{equation}	
 \begin{equation}
  \begin{small}
  \begin{aligned} 
& R_D= \log_2\left(1+\frac{\parallel \sqrt{p_D}{\rm\mathbf{\left(D + D_2 \Theta D_1\right)}} {\rm\mathbf{x}}_D \parallel^2}{\parallel\sqrt{p_U}{\rm\mathbf{\left(V +D_2 \Theta U_1\right)x}}_U\parallel^2 +\parallel{\rm\mathbf{n}}_2\parallel^2}\right).    
  \end{aligned}
  \end{small}
\end{equation}	
The notations $R_U\in\mathbb{R}$ and $R_D\in\mathbb{R}$ represents the UL and DL sum rate, respectively. The constraint $\eqref{D2}$ represents that the UL sum rate $R_U$ is required to be larger than a pre-defined threshold $\gamma_{thr,U}$ for guaranteeing UL QoS, whilst $\eqref{D3}$ indicates that for DL QoS satisfaction.
 The proposed optimization problem is transformed as
\begin{subequations}\label{obj1}
  \begin{align} 
   &\underset{{\rm\mathbf{\Theta}},\mu}{\text{min}} \quad \parallel \sqrt{p_D}{\rm\mathbf{(S +U_2 \Theta D_1)}}{\rm\mathbf{x}}_D \parallel^2\notag \\&\qquad\quad +\mu\parallel\sqrt{p_U}{\rm\mathbf{(V +D_2 \Theta U_1)x}}_U\parallel^2 \label{D1}\\
    &\text{s.t.} \quad \theta_{i} \in [0,2\pi],  \quad \forall \, 1\leq i \leq K, \\
    &\quad\quad \parallel \sqrt{p_U}{\rm\mathbf{(U + U_2 \Theta U_1)}} {\rm\mathbf{x}}_U \parallel^2 \ge t_{thr,U},\label{C2}\\
    &\quad\quad \parallel \sqrt{p_D}{\rm\mathbf{(D + D_2 \Theta D_1)}} {\rm\mathbf{x}}_D \parallel^2 \ge t_{thr,D},\label{C3}\\
    &\quad\quad \mu \in \mathbb{R}.
  \end{align}
\end{subequations}	 
However, the problem objective and constraints of $\eqref{obj}$ are non-convex and non-linear. Note that the sum rates $R_U$ and $R_D$ increase monotonically with the signal-to-noise-ratio (SINR) term because $\log_2$ is a monotone increasing function, i.e., sum rate $={\log_2}(1+$SINR$)$. While the received signal power is guaranteed to be above a certain value, maximizing the SINR of received signal is approximately equivalent to minimizing the interference power. The objective function can be transformed into minimizing the power of UL SI and DL co-channel interference and guaranteeing the power of DL/UL received signal.

The parameter $\mu$ represents the importance ratio between DL co-channel interference and UL SI. Note that $t_{thr,U}$ and $t_{thr,D}$ in $\eqref{C2}$ and $\eqref{C3}$ respectively indicate UL/DL signal quality satisfaction, which implies asymptotic meaning of QoS satisfaction with those in $\eqref{D2}$ and $\eqref{D3}$. In the following, we will discuss two different cases with and without QoS consideration, i.e., $t_{thr,U} \neq 0$ and  $t_{thr,D} \neq 0$ for QoS-aware scheme and $t_{thr,U} = t_{thr,D} = 0$ for non-QoS scenario design.
\addtolength{\topmargin}{+0.21 in}
\section{Proposed RIS-Empowered FD Interference Cancellation (RFIC) Scheme} \label{BP}
	In the conventional FD MIMO system without RIS deployment, the existing works utilize signal processing methods via either precoding or postcoding to suppress the interference, which requires unaffordable computational overhead in practical implementation with increasing number of transmit or receiving antennas. However, under RIS deployment, we can generate the artificial channel to automatically and adjustably mitigate strong interferences. Our proposed RFIC scheme can optimally determine the RIS phase to substantially mitigate FD interferences. 
	
	As observed from $\eqref{obj1}$, we can infer that both the objective and corresponding constraints possess convexity property indicating that the optimal solution can be derived via celebrated Lagrange optimization \cite{opti}. By applying Lagrangian method, we can obtain the augmented Lagrangian expression with parameters of ${\rm\boldsymbol\theta}_{t}$ and ${\rm\boldsymbol\lambda}_{t}$ as
	\begin{equation}\label{Lag3}
	\begin{aligned}
J({\rm\boldsymbol\theta}_t,{\rm\boldsymbol\lambda}_t)&= P({\rm\boldsymbol\theta}_t,{\rm\boldsymbol\lambda}'_t)\\&
- \lambda_{2K+1}(\parallel\sqrt{p_U}{\rm\mathbf{(U + U_2 \Theta U_1)}}{\rm\mathbf{x}}_U \parallel^2-t_{thr,U})\\&
- \lambda_{2K+2}(\parallel\sqrt{p_D}{\rm\mathbf{(D + D_2 \Theta D_1)}}{\rm\mathbf{x}}_D \parallel^2-t_{thr,D}),
\end{aligned}
\end{equation}\vspace{-5pt}
where
\begin{equation}
\begin{aligned}
P({\rm\boldsymbol\theta}_t,{\rm\boldsymbol\lambda}'_t)&=\quad 
\parallel\sqrt{p_D}{\rm\mathbf{(S +U_2 \Theta D_1)}}{\rm\mathbf{x}}_D\parallel^2\\&+\mu\parallel\sqrt{p_U}{\rm\mathbf{(V +D_2 \Theta U_1)x}}_U\parallel^2\\ &- \left(\sum_{i=0}^{K-1} \lambda_{2i+1}\theta_i+\lambda_{2i+2}(\theta_i-2\pi)\right).
\end{aligned}
\end{equation}
The parameter ${\boldsymbol{\theta}}_{t}=\{\theta_1, \theta_2, ..., \theta_{K}\}$ is a set containing $K$ designed RIS variables.
 ${\boldsymbol{\lambda}}_{t}=\{\lambda_1, \lambda_2, ..., \lambda_{2K+1},\lambda_{2K+2}\}$ and ${\boldsymbol{\lambda}}'_{t}=\{\lambda_1, \lambda_2, ...,  \lambda_{2K}\}$ are the set of $2K+2$ and $2K$ Lagrangian multipliers. The first-order derivative of the Lagrangian expression of $\eqref{Lag3}$ can be given in $\eqref{Lag4}$ shown at the top of next page,
\begin{figure*}
\vspace{2pt}
\begin{small}
\begin{equation}
\begin{aligned}\label{Lag4}
\frac{\partial J({\rm\boldsymbol\theta}_t,{\rm\boldsymbol\lambda}_t)}{\partial \theta_i}=&
	N_i({\rm\boldsymbol\theta}_t,{\rm\boldsymbol\lambda}'_t)
	-\lambda_{2K+1}\left(p_U x_U^2\right)2jb_{n,i}e^{j\theta_i}\sum_{n=1}^{N_r}  \left( b_{n,i} +\sum_{p\neq i}^{N_r} b_{n,p}e^{j\theta_p}+{\rm\mathbf{u}}_n \right)
	-\lambda_{2K+2}\left(p_D x_D^2\right)2ja_{m,i}e^{j\theta_i}\\&\sum_{m=1}^{M}\left(a_{m,i} +\sum_{q\neq i}^{M} a_{m,q}e^{j\theta_q}+{\rm\mathbf{d}}_n \right)
	,  \quad \forall \, 1\leq i \leq K,
\end{aligned}
\end{equation}
where 
\begin{equation}
\begin{aligned}
	N_i({\rm\boldsymbol\theta}_t,{\rm\boldsymbol\lambda}'_t)=&
	\left(p_D x_D^2\right)2jz_{n,i}e^{j\theta_i} \sum_{n=1}^{N_r}\left(z_{n,i}+\sum_{p\neq i}^{N_r} z_{n,p}e^{j\theta_p}+{\rm\mathbf{s}}_n \right)+\mu \left(p_U x_U^2\right)2jy_{m,i}e^{j\theta_i}\sum_{m=1}^{M}
	\left(y_{m,i}+\sum_{q\neq i}^{M} y_{m,q}e^{j\theta_q}+{\rm\mathbf{v}}_n \right)\\&-\lambda_{2i-1}-\lambda_{2i}
	,  \quad \forall \, 1\leq i \leq K.
\end{aligned}
\end{equation}
\end{small}
\hrulefill
\end{figure*}
where $z_{i,j} = {\rm\mathbf{U}_2}_{i,j}{\rm\mathbf{d}_1}_{j}$,  ${\rm\mathbf{d}_1}\!=\! [\sum_{i=1}^{N_t} {\rm\mathbf{D}}_{1_{1,i}}, \sum_{i=1}^{N_t}{\rm\mathbf{D}}_{1_{2,i}}, ..., \\\sum_{i=1}^{N_t} {\rm\mathbf{D}}_{1_{K,i}}]^T  \!\in\! \mathbb{C}^{K}$ is the sum of the columns of ${\rm\mathbf{D_1}}$, $a_{i,j} = \\{\rm\mathbf{D}_2}_{i,j}{\rm\mathbf{d}_1}_{j}$, where ${\rm\mathbf{U}_2}_{i,j}$, ${\rm\mathbf{D}_1}_{i,j}$ and ${\rm\mathbf{D}_2}_{i,j}$ are the elements in the $i$-th row and the $j$-th column of ${\rm\mathbf{U}_2}$, ${\rm\mathbf{D}_1}$ and ${\rm\mathbf{D}_2}$, and ${\rm\mathbf{d}_1}_{j}$ is the $j$-th element of ${\rm\mathbf{d}_1}$. The notation $b_{i,j} = {\rm\mathbf{U}}_{2_{i,j}}{\rm\mathbf{u}_1}_{j}$ is defined with ${\rm\mathbf{u}_1} = [ \sum_{i=1}^{N}{\rm\mathbf{U}}_{1_{1,i}}, \sum_{i=1}^{N}{\rm\mathbf{U}}_{1_{2,i}}, ..., \sum_{i=1}^{N}{\rm\mathbf{U}}_{1_{K,i}} ]^T\in \mathbb{C}^{K}$ denoted as the sum of the columns of ${\rm\mathbf{U_1}}$, and $y_{i,j} = {\rm\mathbf{D}_2}_{i,j}{\rm\mathbf{u}_1}_{j}$, where ${\rm\mathbf{U}}_{1_{i,j}}$ is the element in the $i$-th row and the $j$-th column of ${\rm\mathbf{U}_1}$, and ${\rm\mathbf{u}_1}_{j}$ is the $j$-th element of ${\rm\mathbf{u}_1}$. The notation ${\rm\mathbf{s}} = [\sum_{i=1}^{N_t} {\rm\mathbf{S}}_{1,i}, \sum_{i=1}^{N_t}{\rm\mathbf{S}}_{2,i}, ..., \sum_{i=1}^{N_t} {\rm\mathbf{S}}_{N_r,i}]^T\in \mathbb{C}^{N_r}$ is the sum of the columns of ${\rm\mathbf{S}}$, ${\rm\mathbf{v}} = [\sum_{i=1}^{N} {\rm\mathbf{V}}_{1,i},  \sum_{i=1}^{N}{\rm\mathbf{V}}_{2,i}, ...,  \sum_{i=1}^{N}\\ {\rm\mathbf{V}}_{M,i}]^T\in \mathbb{C}^{M}$ is the sum of the columns of ${\rm\mathbf{V}}$, where ${\rm\mathbf{S}}_{i,j}$ and ${\rm\mathbf{V}}_{i,j}$ are the elements in the $i$-th row and the $j$-th column of ${\rm\mathbf{S}}$ and ${\rm\mathbf{V}}$, respectively. Similarly, ${\rm\mathbf{u}} = [\sum_{i=1}^{N} {\rm\mathbf{U}}_{1,i},  \sum_{i=1}^{N}{\rm\mathbf{U}}_{2,i}, ...,  \\\sum_{i=1}^{N} {\rm\mathbf{U}}_{N_r,i}]^T\in \mathbb{C}^{N_r}$ is the sum of the columns of ${\rm\mathbf{U}}$ and ${\rm\mathbf{d}} = [\sum_{i=1}^{N_r} {\rm\mathbf{D}}_{1,i}, \sum_{i=1}^{N_t}{\rm\mathbf{D}}_{2,i}, ...,  \sum_{i=1}^{N_t} {\rm\mathbf{D}}_{M,i}]^T\in \mathbb{C}^{M}$ is the sum of the columns of ${\rm\mathbf{D}}$, where ${\rm\mathbf{U}}_{i,j}$ and ${\rm\mathbf{D}}_{i,j}$ are the elements in the $i$-th row and the $j$-th column of ${\rm\mathbf{U}}$ and ${\rm\mathbf{D}}$, respectively. ${\rm\mathbf{s}}_n$, ${\rm\mathbf{v}}_n$, ${\rm\mathbf{u}}_n$ and ${\rm\mathbf{d}}_n$ respectively represent the $n$-th element of ${\rm\mathbf{s}}$, ${\rm\mathbf{v}}$, ${\rm\mathbf{u}}$ and ${\rm\mathbf{d}}$. However, we can observe from $\eqref{Lag4}$ that the augmented objective function is unsolvable due to mutually-coupled exponential terms, which provokes no existing closed-forms. Therefore, we adopt the heuristic scheme for acquiring the sub-optimal solution of QoS-aware RIS adjustment. We iteratively optimize a single RIS element via quantized exhausted search with the remaining elements fixed, and then take turns performing the same procedure until convergence.
\begin{figure}
\vspace{5pt}
\centering
\includegraphics[width=3.3 in]{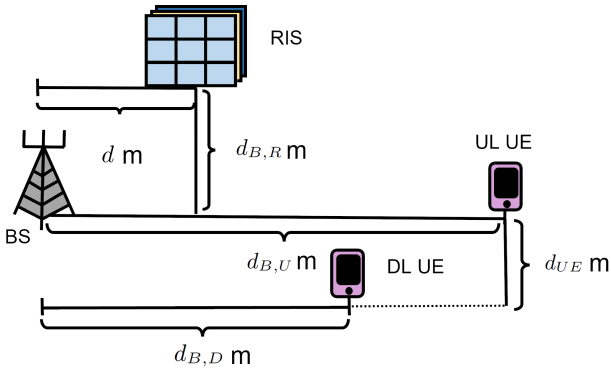}
\caption{The illustration of deployment regarding the relative distance between BS, RIS and DL/UL UE.} \label{Fig.2}
\end{figure}
	Furthermore, for obtaining the closed-form solution, we relax the problem by considering the case without the QoS constraints, i.e., $\lambda_{2K+1}=\lambda_{2K+2}=0$. As explained previously, we can observe from problem $\eqref{obj1}$ that the objective function $\eqref{D1}$ is convex due to its quadratic form. Similarly, we adopt the celebrated Lagrange optimization to acquire the optimal solution. The parameter ${\rm\boldsymbol\lambda}'_{t}$ represents a set regarding the RIS constraint in $\eqref{D1}$. The augmented objective function with ${\rm\boldsymbol\theta}_{t}$ and ${\rm\boldsymbol\lambda}'_{t}$ can be given by
	\vspace{-5pt}
	\begin{equation}
\begin{aligned}\label{Lag}
J({\rm\boldsymbol\theta}_t,{\rm\boldsymbol\lambda}'_t)= P({\rm\boldsymbol\theta}_t,{\rm\boldsymbol\lambda}'_t),
\end{aligned}
\vspace{-5pt}
\end{equation}

which represents that the part of the augmented Lagrangian expression in $\eqref{Lag3}$ without considering QoS constraints. 
Then, based on the theory of linear algebra \cite{17}, we can analytically derive three potential cases on the condition of $N_r+M$ to be smaller, equal to, or greater than the number of RIS elements $K$ in the following theorem, where $N_r+M $ denotes the total receiving antennas in entire system. 
\addtolength{\topmargin}{-0.24 in}
\begin{theorem} \label{thm}
	Consider interference mitigation for RIS-empowered FD MIMO transmission, three potential solutions can be obtained as follows. (a) First, when the number of whole system receiving antennas is equal to RIS elements $N_r+M=K$, the closed-form solution of RIS phase shifts can be derived as\\
	\begin{equation}
\begin{aligned} \label{Uni}
&e^{j\theta_i}= \frac{-\det({\rm\mathbf{W}_c}^{(i)})}{\det(\rm\mathbf{W}_c)}, \quad \forall \, 1\leq i \leq K,
\end{aligned}
\end{equation}
where
\begin{equation}
\left\{
\begin{aligned}
&{\rm\mathbf{W}_c}=\begin{bmatrix} {\rm\mathbf{U}_c}\\{\rm\mathbf{D}_c} \end{bmatrix} \in \mathbb{C}^{N_r\times K},
\\&{\rm\mathbf{W}_c}^{(i)}=\begin{bmatrix} {\rm\mathbf{U}_c}^{(i)}\\{\rm\mathbf{D}_c}^{(i)} \end{bmatrix} \in \mathbb{C}^{M\times K}.
\end{aligned}
\right.
\vspace{-5pt}
\end{equation}

\vspace{20pt}Note that $\det(\cdot)$ is the determinant of a matrix, ${\rm\mathbf{U}_c}= {\rm\mathbf{U}_2}\\\operatorname{diag}({\rm\mathbf{d}_1})\in \mathbb{C}^{N_r\times K}$ and ${\rm\mathbf{D}_c}= {\rm\mathbf{D}_2}\operatorname{diag}({\rm\mathbf{u}_1})\in \mathbb{C}^{M\times K}$.
We define ${\rm\mathbf{U}_c}^{(i)}=[{\rm\mathbf{U}}_{c_1}, ..., {\rm\mathbf{U}}_{c_{i-1}}, {\rm\mathbf{s}}, {\rm\mathbf{U}}_{c_{i+1}}, ...,{\rm\mathbf{U}}_{c_{K}}]$ with ${\rm\mathbf{U}}_{c_i}\in \mathbb{C}^{N_r}$ is the $i$-th column vector of ${\rm\mathbf{U}}_c$, and ${\rm\mathbf{D}_c}^{(i)}=[{\rm\mathbf{D}}_{c_1}, ...,\\ {\rm\mathbf{D}}_{c_{i-1}}, {\rm\mathbf{v}}, {\rm\mathbf{D}}_{c_{i+1}}, ...,{\rm\mathbf{D}}_{c_{K}}]$ with ${\rm\mathbf{D}}_{c_i}\in \mathbb{C}^{M}$ as the $i$-th column vector of ${\rm\mathbf{D}}_c$. (b) When $N_r+M<K$, infinite solutions of RIS phase shifts are obtained as
\begin{equation}\label{uni1}
\left\{
\begin{aligned} 
&{\rm\mathbf{u}_2}_n({\rm\mathbf{d}_1}\circ {\rm\mathbf{p}})+{\rm\mathbf{s}}_n =0,  \quad\forall \, 1\leq n \leq N_r,\\& {\rm\mathbf{d}_2}_n({\rm\mathbf{u}_1}\circ {\rm\mathbf{p}})+{\rm\mathbf{v}}_n =0,  \quad\forall \, 1\leq n \leq M,
\end{aligned}
\right.
\end{equation}

where ${\rm\mathbf{u}_2}_n$ and ${\rm\mathbf{d}_2}_n$ are respectively the $n$-th channel row of ${\rm\mathbf{U}_2}$ and ${\rm\mathbf{D}_2}$. The symbol $\circ$ denotes the Hadamard product, and phase shifts ${\rm\mathbf{p}}= [e^{j\theta_{1}}, e^{j\theta_{2}},..., e^{j\theta_{K}}]^T$. (c)
As for the case of $N_r+M >K$, no optimal solution of phase shift for our considered problem can be acquired.
\vspace{-10pt}
\end{theorem}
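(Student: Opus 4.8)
The plan is to show that \emph{perfect} interference cancellation — driving the relaxed objective $\eqref{D1}$ down to its trivial lower bound $0$ — is equivalent to a single linear system in the vector of reflection coefficients $\mathbf{p}=[e^{j\theta_1},\dots,e^{j\theta_K}]^T$, and then to read the three cases straight off the shape (under-, exactly-, or over-determined) of that system, invoking only elementary linear algebra \cite{17}.

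First I would reduce the objective. Both summands of $\eqref{D1}$ are nonnegative, so the relaxed problem attains the value $0$ exactly when $\mathbf{(S+U_2\Theta D_1)x}_D=\mathbf{0}$ and $\mathbf{(V+D_2\Theta U_1)x}_U=\mathbf{0}$ hold simultaneously. Writing $\mathbf{\Theta}=\operatorname{diag}(\mathbf{p})$ (absorbing $\alpha$) and using that $\mathbf{D_1 x}_D$ collapses to the column-sum vector $\mathbf{d_1}$ and $\mathbf{U_1 x}_U$ to $\mathbf{u_1}$ once the common data-symbol scalars $\sqrt{p_D}\,x_D$ and $\sqrt{p_U}\,x_U$ are factored out (exactly as in $\eqref{Lag4}$), one gets $\mathbf{\Theta D_1 x}_D\propto\mathbf{d_1}\circ\mathbf{p}$, hence $\mathbf{U_2\Theta D_1 x}_D\propto \mathbf{U_2}\operatorname{diag}(\mathbf{d_1})\mathbf{p}=\mathbf{U}_c\mathbf{p}$, and likewise the DL term becomes $\mathbf{D}_c\mathbf{p}$. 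The two cancellation conditions therefore read, row by row, precisely as $\eqref{uni1}$, and stacked together they form the linear system $\mathbf{W}_c\mathbf{p}=-\mathbf{w}$ with $\mathbf{W}_c=[\mathbf{U}_c^{T}\ \mathbf{D}_c^{T}]^{T}$ of size $(N_r+M)\times K$ and $\mathbf{w}=[\mathbf{s}^{T}\ \mathbf{v}^{T}]^{T}$.

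Next I would dispatch the three cases. (a) If $N_r+M=K$, $\mathbf{W}_c$ is square; assuming it is nonsingular (true for generic channel realizations), the system has the unique solution $\mathbf{p}=-\mathbf{W}_c^{-1}\mathbf{w}$, and Cramer's rule applied to the $i$-th coordinate — replacing the $i$-th column of $\mathbf{W}_c$ by $\mathbf{w}$ to form $\mathbf{W}_c^{(i)}$, which flips the sign of that determinant relative to replacing by $-\mathbf{w}$ — yields $e^{j\theta_i}=-\det(\mathbf{W}_c^{(i)})/\det(\mathbf{W}_c)$, i.e. $\eqref{Uni}$. (b) If $N_r+M<K$, there are more unknowns than equations, so the affine solution set of $\mathbf{W}_c\mathbf{p}=-\mathbf{w}$ is nonempty and of positive dimension: infinitely many phase-shift vectors null the interference, and they are exactly the solutions of the row-wise system $\eqref{uni1}$. (c) If $N_r+M>K$ the system is overdetermined; generically $\operatorname{rank}[\mathbf{W}_c\ |\ {-}\mathbf{w}]=K+1>K=\operatorname{rank}(\mathbf{W}_c)$, so by the rank (consistency) criterion no $\mathbf{p}$ satisfies $\mathbf{W}_c\mathbf{p}=-\mathbf{w}$, and therefore no phase configuration attains the ideal zero-interference value.

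The main obstacle I anticipate is the bookkeeping of the reduction step: carefully verifying that the matrix chains $\mathbf{U_2\Theta D_1 x}_D$ and $\mathbf{D_2\Theta U_1 x}_U$ really do collapse to $\mathbf{U}_c\mathbf{p}$ and $\mathbf{D}_c\mathbf{p}$ under the column-sum convention used for $\mathbf{d_1},\mathbf{u_1},\mathbf{s},\mathbf{v}$, and that the common positive scalars drop cleanly out of the zero conditions. A secondary subtlety is in case (c): a least-squares $\mathbf{p}$ always exists, so the statement must be read as ``no phase shift achieves \emph{perfect} cancellation,'' and I would phrase the conclusion that way, noting the fallback to the per-element quantized search described earlier. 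Throughout, the full-rank / nonsingularity genericity of $\mathbf{W}_c$ should be recorded as a mild standing assumption on the channel matrices, since it is what makes the ``unique solution'' versus ``inconsistent'' dichotomy sharp.
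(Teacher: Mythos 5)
Your proposal is correct and follows essentially the same route as the paper: both reduce perfect cancellation to the stacked linear system $\mathbf{W}_c\mathbf{p}=-[\mathbf{s}^T\ \mathbf{v}^T]^T$ (the paper's \eqref{Lag2}) and settle the three cases by comparing the number of equations $N_r+M$ against the number of unknowns $K$, with \eqref{Uni} read off by Cramer's rule in the square case. The only differences are matters of rigor rather than method: you obtain the system directly from the nonnegativity of the two summands of \eqref{D1} instead of via the paper's first-order Lagrangian conditions, and you state explicitly the genericity (nonsingularity of $\mathbf{W}_c$) and the ``no \emph{perfect} cancellation'' reading of case (c), both of which the paper leaves implicit.
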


\begin{proof} We have $N_r+M $ receiving antennas in our considered system including $N_r$ receiving antennas at BS for UL and $M$ antennas for DL UEs, which means that $N_r+M $ equations are derived based on the first-order derivatives on Lagrangian expression as the total interference received. Accordingly, there exist three conditions as follows: $N_r+M =K$, $N_r+M <K$, and $N_r+M >K$. Based on $\eqref{Lag}$, we can have $K$ first-order derivatives which can be given by
	\begin{equation}
	\begin{aligned}\label{partial}
\frac{\partial J({\rm\boldsymbol\theta}_t,{\rm\boldsymbol\lambda}'_t)}{\partial \theta_i}&= N_i({\rm\boldsymbol\theta}_t,{\rm\boldsymbol\lambda}'_t), \quad \forall \, 1\leq i \leq K.
\end{aligned}
\end{equation} 
\addtolength{\topmargin}{0.35 in}
In the case that $N_r+M =K$, there are $N_r+M $ equations and $K$ variables, and the degree of freedom of variables is equal to the number of equations. Therefore, we can obtain a unique solution of phase shift to null all interference over each channel path. Notice that the optimal solution for phase shifts acquired from $\eqref{Uni}$ does not contain the ratio $\mu$ since all interference can be eliminated. Accordingly, we can obtain the following equivalent equation set from $\eqref{uni1}$ which is given by
\begin{equation}\label{Lag2}
\left\{
\begin{aligned} 
&\sum_{i=1}^{K} {\rm\mathbf{u}}_{2_{n,i}}{\rm\mathbf{d}}_{1_i}e^{{\theta}_i}+{\rm\mathbf{s}}_n =0,  \quad\forall \, 1\leq n \leq N_r,\\& \sum_{i=1}^{K} {\rm\mathbf{d}}_{2_{n,i}}{\rm\mathbf{u}}_{1_i}e^{{\theta}_i}+{\rm\mathbf{v}}_n =0,  \quad\forall \, 1\leq n \leq M.
\end{aligned}
\right.
\end{equation}
Therefore, the phase shift in $\eqref{Uni}$ can be acquired by solving $\eqref{Lag2}$ via substitution among variables of $\theta_i, \forall 1\leq i\leq K$.

Furthermore, consider the second case with $N_r+M<K$, the variable's degree of freedom is higher than the number of equations, which results in infinite solutions. Accordingly, we can derive a decent form of $\eqref{uni1}$ based on $\eqref{Lag2}$. Conversely, if $N_r+M>K$, we have insufficient degrees of freedom to solve these equations which cannot guarantee to reach an optimal solution for phase shift of RIS. This completes the proof. 
\end{proof}

	Based on Theorem 1, we can observe that perfect interference cancellation cannot be performed by RIS if $N_r+M>K$. Note that the considered equation set in $\eqref{partial}$ will still contain the ratio $\mu$. When the value of $\mu$ is large, it implies that we focus on eliminating DL co-channel interference instead of UL SI; conversely, the effect will be opposite. On the other hand, in the case that the number of receiving antenna is equal to or less than that of RIS elements, i.e., $N_r+M \le K$, the adjustment of RIS phase shifts can effectively cancel all the UL SI and DL co-channel interferences. Meanwhile, when $N_r+M=K$, a closed-form solution in $\eqref{Uni}$ can be acquired indicating an optimal RIS phase shift is obtained to perfectly cancel all considered interferences in an RIS-FD MIMO system.
	
\section{Performance Evaluations} \label{PE}	
	The performance results of proposed RFIC scheme for RIS-FD MIMO transmission are evaluated via simulations. The channels comprise distance-aware large-scale and small-scale Rayleigh fading. The large-scale pathloss of $3.5$ GHz frequency spectrum is defined as $PL_{\text{LoS}}=38.88+22\log_{10}(d_0)$ dB for line-of-sight path \cite{18}, where $d_0$ is the distance between the transmitter and receiver. The Rayleigh fading is considered as an exponential distribution with expectation of one. The main system parameters of our RIS-FD MIMO system are listed in Table \ref{Parameter}. The transmit antennas of UL UEs and BS are assigned with equal power and the total power is respectively denoted by $P_{U,max}=N\cdot{p_U}$ and $P_{D,max}=N_t\cdot{p_D}$. The simulated scenario is shown in Fig. \ref{Fig.2}, where $d$, $d_{B,U}$ and $d_{B,D}$ denote the horizontal distance of BS to RIS, BS to UL UEs and BS to DL UEs, respectively; while $d_{B,R}$ and $d_{UE}$ respectively indicates the distance between BS and UL UEs and between DL UEs to UL UEs. We set $d=d_{B,U}=d_{B,D}=60$ m, $d_{B,R}=70$ m, and $d_{UE}=60$ m. Note that we evaluate received DL/UL sum rate considering different transmit power, horizontal distance between RIS and BS,   and the numbers of UL UEs, transmit antennas and RIS elements. The proposed RFIC scheme is also compared to existing benchmarks in open literature.


\begin{table}
\begin{footnotesize}
\begin{center}
\caption {Parameters of RIS-FD MIMO System}
    \begin{tabular}{ll}
        \hline
        System Parameters & Value \\ \hline 
        Carrier frequency & $3.5$ GHz \\
        Noise power spectral density & $-174$ dBm/Hz\\
        Bandwidth & $20$ MHz \\
		Total transmit power of UL UEs & $\left[ 1, 15\right]$ mWatt\\
		Total transmit power of BS & $\left[ 1, 15\right]$ mWatt\\
		RIS reflection decaying efficiency & $0.95$ \\
		Number of transmit/receiving antennas & $\{2,4\}$\\
		Number of RIS elements & $\{0, 4, 8\}$\\
		Number of UL UEs & $\left[1,5\right]$ \\
		Number of DL UEs & $2$\\
        \hline
    \end{tabular} \label{Parameter}
\end{center}
\end{footnotesize}
\vspace{-10pt}
\end{table}

\begin{figure}[h]

\centering
\includegraphics[width=3.3in]{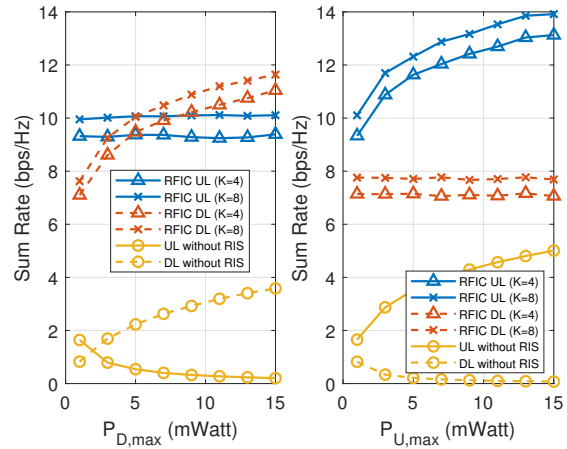}
\caption{Sum rate of our proposed RFIC versus different total transmit power $P_{D,max}$ and $P_{U,max}$ for $N_r=2$ receiving antennas, $N_t=2$ transmit antennas, $N=2$ UL UEs and $M=2$ DL UEs.} \label{Fig.3}
\end{figure}	

	In Fig. \ref{Fig.3}, we evaluate sum rate of proposed RFIC for non-QoS case with $t_{thr,D}=t_{thr,U}=0$ versus different transmit power $P_{D,max}$ of BS and $P_{U,max}$ of UL UE for $N_r=2$, $N_t=2$ and $N=2$. In the left plot, as $P_{D,max}$ increases, it can be observed from the cases without RIS that the DL sum rate is intuitively increased; while that for UL is decreased since it becomes SI power for UL. On the other hand, the UL sum rate remains unchanged since our proposed RFIC can null the SI and DL sum rate due to larger signal power $P_{D,max}$. In the right plot, with similar reason, the RFIC scheme results in enhanced and unchanged sum rate for UL and DL sum rate respectively with an increase in total transmit power of UL UE $P_{U,max}$. With the increment of RIS elements from $K=4$ to $8$, higher sum rate can be achieved due to more reflected desired signals. Compared with the cases without RIS, it can be seen that those with RFIC achieve much higher sum rates with its capability of interference cancellation via the adjustment of RIS phase shifts.

\begin{figure}
\centering
\includegraphics[width=3.3in]{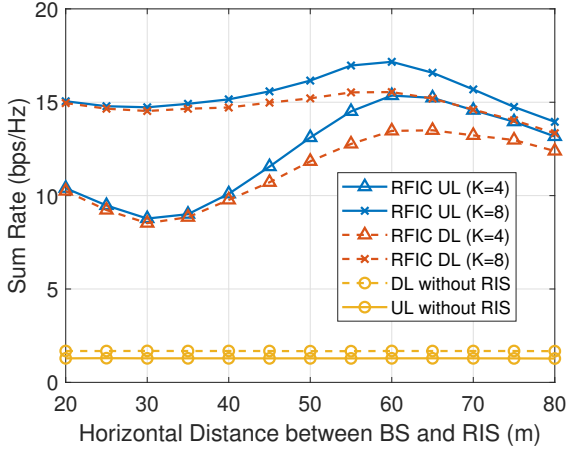}
\caption{Sum rate of proposed RFIC versus different distances between BS and RIS with $N_t=2$ transmit antennas, $N_r=2$ receiving antennas, $K=\{4,8\}$ RIS elements, $N=2$ UL UEs, $M=2$ DL UEs and $P_{D,max} = P_{U,max}=1$ mWatt.} \label{Fig.4}

\end{figure}

	Fig. \ref{Fig.4} depicts the sum rate performance of our proposed RFIC considering $t_{thr,D}=t_{thr,U}=0$ versus different horizontal distances from BS to RIS under $N_r=2$, $N_t=2$ and $N=2$. We can observe from Fig. \ref{Fig.4} that there exist three performance trends: $d<60$ m, $d=60$ m and $d>60$ m. If $d$ is smaller than $60$ m, the benefit of RIS deployment is significantly high, which can reflect more desired signal power with cancellation of severe interference. Moreover, it performs a convex curve when $d<60$ m since the reflected desired signals are the weakest when $d=30$ m, which is caused by the longest transmit distance through RIS. If $d$ is equal to $60$ m, the highest performance can be reached due to the shortest distance between RIS and UEs. As $d$ becomes larger than $60$ m, sum rates are decreasing due to significant effects of higher pathloss. 

\begin{figure}

\centering
\includegraphics[width=3.3in]{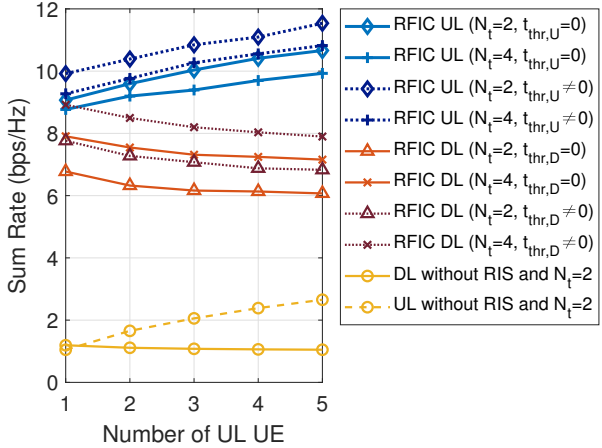}
\caption{Sum rate versus different numbers of UL UEs with $N_t=\{2,4\}$ transmit antennas, $N_r=2$ receiving antennas, RIS elements of $K=\{4,8\}$, $M=2$ DL UEs and $P_{D,max} = P_{U,max}=1$ mWatt.} \label{Fig.5}
\end{figure}

	In Fig. \ref{Fig.5}, we demonstrate sum rate of  RFIC scheme with $t_{thr,D}=0$ and $t_{thr,U}=0$ for non-QoS case and $t_{thr,D}=70$ and $t_{thr,U}=70$ for QoS-aware case versus different numbers of UL UEs. It reveals that UL sum rate increases as the number of UL UEs increases due to more signal power reflected, whilst the DL sum rate decreases because it causes more co-channel interference for DL. Furthermore, it can be observed that additionally deployed transmit antennas $N_t$ will cause performance degradation for UL but result in increased DL sum rate since transmit signals from the BS will induce not only SI but also DL transmit signal. Moreover, the results show that proposed RFIC scheme with QoS-aware cases outperforms those with non-QoS due to joint consideration of guaranteed received signal quality and mitigation of interference. Although RFIC cannot perfectly cancel total interference under the case with QoS restriction, it alleviates most of interference in order to sustain desired signal quality.

\begin{figure}

\centering
\includegraphics[width=3.3in]{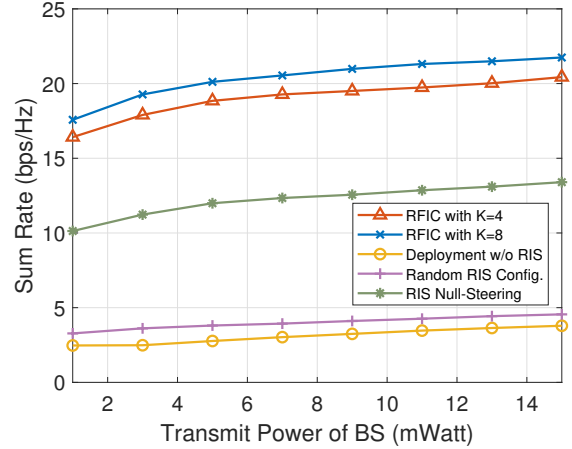}
\caption{The comparison of total sum rate performance of proposed RFIC scheme with existing benchmarks of random RIS configuration, null-steering method, deployment without RIS by considering different transmit power of BS, $N_t=2$ transmit antennas, $N_r=2$ receiving antennas, $M=2$ DL UEs, $N=2$ UL UEs and $P_{U,max}=1$ mWatt.} \label{Fig.6}
\vspace{-10pt}
\end{figure}

	In Fig. \ref{Fig.6}, we compare the total sum rate performance  of our proposed RFIC with existing benchmarks versus different transmit power $P_{D,max}$ of BS with $N_t=2$, $N_r=2$ and $N=2$. We consider three benchmarks including conventional deployment without RIS, random RIS configuration, and null-steering method \cite{100}. Note that random RIS method means arbitrary configuration of phase shifts under the constraints of $[0,2\pi]$. Moreover, null-steering is regarded as a well-known postcoder method which minimizes interference after received superposed signals. We can observe from Fig. \ref{Fig.6} that the performance of random RIS setting is similar to that without RIS deployment since it does not provide adequate capability for interference mitigation. Furthermore, null-steering has higher performance than random phase shifts and non-RIS cases. This is because null-steering is capable of mitigating the interference but still with abundant residual interference affecting the desired signal. As a result, the proposed RFIC can perfectly mitigate interference, outperforming the other exiting schemes in open literature, e.g., more than $50\%$ of performance gain of sum rate is achieved by RFIC with $K=8$ having $22$ bps/Hz of sum rate compared to RIS null-steering with $14$ bps/Hz under $P_{D,max} =15$ mWatt.

\section{Conclusions} \label{CON}
	In this paper, we conceive an RFIC scheme for RIS-empowered interference mitigation for 6G FD MIMO transmissions considering QoS requirement of desired signal. The proposed RFIC scheme is theoretically proved to posses a closed-form solution of RIS phase shift adjustment when the number of RIS elements and receiving antennas are equivalent. There exist infinite solutions if the number of RIS elements is higher than receiving antennas; conversely, no optimal outcome can be obtained. Simulation results have shown that the proposed RFIC scheme effectively mitigates the SI of FD MIMO system with more equipped RIS elements compared to that without RIS deployment. Moreover, the highest sum rate can be attained when the UEs are located nearest to RIS. Benefited by substantial interference cancellation, the proposed RFIC outperforms the other existing methods including deployment without RIS, random configuration and null-steering method in terms of the highest performance.
	
\renewcommand{\IEEEbibitemsep}{0pt plus 0.5pt}
\makeatletter
\IEEEtriggercmd{\reset@font\normalfont\fontsize{7pt}{7pt}\selectfont}
\makeatother
\IEEEtriggeratref{1}

\bibliographystyle{IEEEtran}
\bibliography{IEEEabrv}

\end{document}